\newtheorem{thmcase}{Case}
\newtheorem{thmlemma}{Lemma}
\newcommand{\unit}[1]{\,\mathrm{#1}}
\newcommand{\bra}[1]{\left< #1 \right|}
\newcommand{\ket}[1]{\left| #1 \right>}
\newcommand{\braket}[2]{\langle #1 | #2 \rangle}
\newcommand{\melement}[3]{\left< #1 \right| #2 \left| #3 \right>}
\newcommand{\e}{\mathrm{e}}
\newcommand{\abs}[1]{\lvert #1 \rvert}
\newcommand{\norm}[1]{\lVert #1 \rVert}
\begin{document}

\title{Dynamics of entanglement in a dissipative Bose--Hubbard dimer}

\author{Tadeusz Pudlik}
\affiliation{Department of Physics, Boston University, Boston, MA, 02215, USA}
\author{Holger Hennig}
\affiliation{Department of Physics, Harvard University, Cambridge, MA 02138,
USA}
\author{D. Witthaut}
\affiliation{Network Dynamics, Max Planck Institute for Dynamics and Self-Organization (MPIDS), 37077 G\"ottingen, Germany}
\author{David K.~Campbell}
\email[To whom correspondence should be addressed, at ]{dkcampbe@bu.edu}
\affiliation{Department of Physics, Boston University, Boston, MA, 02215, USA}

\date{\today}

\begin{abstract}
We study the connection between the semiclassical phase space of the Bose--Hubbard dimer and inherently quantum phenomena in this model, such as entanglement and dissipation-induced coherence.  Near the semiclassical self-trapping fixed points, the dynamics of EPR entanglement and condensate fraction consists of beats among just three eigenstates.  Since persistent EPR entangled states arise only in the neighborhood of these fixed points, our analysis explains essentially all of the entanglement dynamics in the system.  We derive accurate analytical approximations by expanding about the strong-coupling limit; surprisingly, their realm of validity is nearly the entire parameter space for which the self-trapping fixed points exist.  Finally, we show significant enhancement of entanglement can be produced by applying localized dissipation.
\end{abstract}
\pacs{03.75.Gg, 03.75.Lm, 67.85.Hj}
\maketitle

\section{Introduction}
Ultracold gases in optical traps bring unprecedented control and resolution to the study of quantum systems~\cite{Bloch2005, Morsch2006, Lewenstein2007, Bloch2008}.  Their prospective applications range from simulation of solid state phenomena~\cite{Trotzky2008, Bakr2009, Bloch2012} to quantum metrology~\cite{Giovannetti2004} to quantum information processing~\cite{Jaksch1999}, but realizing their potential requires a thorough understanding of quantum coherence.  An opportune system for the study of coherence is a Bose--Einstein condensate loaded into a double-well optical trap, known as a ``BEC dimer'' or ``bosonic Josephson junction'': it is both amenable to theoretical analysis and realizable in current experiments.  Highlights of past work on the BEC dimer include the demonstrations of matter-wave interferometry~\cite{Schumm2005}, number squeezing~\cite{Jo2007, Esteve2008, Gross2010, Julia-Diaz2012, Julia-Diaz2012a} and measurements transcending the standard quantum limit~\cite{Cadoret2009,Lucke2011}, as well as applications such as gravity detectors~\cite{Hall2007} and noise thermometers~\cite{Gati2006a}.

The macroscopic dynamics of the BEC dimer is well described by a semiclassical mean-field model.  The mean-field dynamics, including the emergence of self-trapping fixed points in bifurcations, has been studied both theoretically~\cite{Raghavan1999, Polkovnikov2002, Hennig2010, Hennig2013} and experimentally~\cite{Albiez2005,Zibold2010}.  These works imply a connection between the structure of the classical phase space and truly quantum phenomena such as entanglement, a connection studied in greater detail recently~\cite{Hennig2012}.

Decoherence and dissipation due to interactions with the environment are major obstacles to long-time control of the coherence of quantum systems~\cite{Aliferis2006}.  However, it was recently demonstrated that dissipation can be a versatile tool for the manipulation of quantum systems---if it can be carefully controlled~\cite{Huelga2007, Kraus2008, Verstraete2009}.  In a BEC dimer, interactions with a thermal bath or the escape of atoms from the trap may strengthen rather than destroy coherence~\cite{Trimborn2008, Witthaut2008}.  Work has now begun on analyzing how the effects of noise and phase space structure interact~\cite{Kordas2012}.

In this paper, we use the global phase space picture~\cite{Hennig2012} to offer new insight into the dynamics of entanglement and dissipation-induced coherence in the double-well optical trap.  We show that the wells are entangled in the Einstein-Podolski-Rosen (EPR) sense only in the neighborhood of the classical fixed points.  The time dependence of the entanglement can be entirely explained in terms of beats among three eigenstates of the system.  These eigenstates can only be found numerically, but we derive analytical results in the limit of weak coupling between the wells.  A perturbative expansion in this limit produces excellent agreement with numerically exact calculations.  The dynamics of other observables, such as well population imbalance or condensate fraction, can be understood in the same framework.

These simple patterns in the dimer's behavior are not only interesting in their own right but also suggest a new approach to understanding the less numerically tractable behavior of BECs in multi-well optical lattices.  The self-trapping fixed points of the dimer are analogous to discrete breathers in larger systems.  We hope to pursue this connection in future work.

\section{Entanglement and coherence in the Bose--Hubbard dimer}

Consider a collection of $N$ bosonic atoms in a double-well optical trap sufficiently deep that only the lowest state in each well is populated.  In this so-called two mode approximation, the atoms' dynamics is described by the Bose--Hubbard Hamiltonian~\cite{Milburn1997},
\begin{equation}\label{eq:hamiltonian}
\hat{H} = -J (\hat{a}_1^\dagger \hat{a}_2 + \hat{a}_2^\dagger \hat{a}_1) + \frac{U}{2}\left(\hat{n}_1(\hat{n}_1 - 1) + \hat{n}_2(\hat{n}_2 - 1)\right),
\end{equation}
where $\hat{a}_i$  is the annihilation operator for an atom in well $i$ and $\hat{n}_i = \hat{a}^\dagger_i \hat{a}_i$ is the number operator.  The same Hamiltonian can be realized in related systems, such as two spin states of atoms in a single optical well~\cite{Zibold2010}.  It is also mathematically equivalent to the Lipkin-Meshkov-Glick model~\cite{Lipkin1965, Glick1965, Meshkov1965, Vidal2004}.

The macroscopic dynamics of the BEC dimer is well-described by a mean-field approximation.  This approximation implicitly assumes that the atoms remain at all times in a product state,
\begin{equation}
\ket{z,\, \phi} = \frac{1}{\sqrt{N}} \left(\sqrt{(1+z)/2}\,\hat{a}_1^\dagger + \sqrt{(1-z)/2}\,\e^{\imath\phi}\,\hat{a}_2^\dagger\right)^N \ket{0},
\end{equation}
where $z$ is the population imbalance and $\phi$ is the relative phase of the two modes.  For general quantum states, these observables are defined by $z = \left(\langle \hat{n}_1 \rangle - \langle \hat{n}_2 \rangle \right)/\left(\langle \hat{n}_1 \rangle + \langle \hat{n}_2 \rangle\right)$ and $\langle \hat{a}^\dagger_1 \hat{a}_2 \rangle = \norm{\langle \hat{a}^\dagger_1 \hat{a}_2 \rangle}\,\e^{\imath \phi}$.  In the mean field approximation, the system is described solely in terms of $z$ and $\phi$, and the evolution of these variables is determined by the classical Hamiltonian,
\begin{equation}
H_\mathrm{MF} = \frac{\Lambda z^2}{2} - \sqrt{1-z^2} \cos(\phi),
\end{equation}
where the parameter $\Lambda = U(N-1)/2J$ captures the strength of the repulsive interaction between the bosons \footnote{Some authors~\cite{Raghavan1999, Hennig2012, Polkovnikov2002} choose to use $N$ in the place of $N-1$ in the definition of $\Lambda$.  The two choices are equivalent in the large-$N$ limit, but using $N-1$ gives more accurate predictions of the frequency of small-amplitude motion about the fixed point (Eq.~\ref{eq:f_MF}).}.  The trajectories $(z(t),\, \phi(t))$ are given by the contours of constant ``energy'' $H_\mathrm{MF} = \mathrm{const.}$, shown in Figure~\ref{fig:global_phase_space}(a).  For $\Lambda < 1$, the model has two stable fixed points, at $(z,\,\phi) = (0,\,0)$ and $(0,\,\pi)$.  As $\Lambda$ is increased above $1$, a supercritical pitchfork bifurcation takes place and the stable fixed point at $\phi = \pi$ is replaced with a pair of stable fixed points at $z \neq 0,\,\phi = \pi$ and an unstable fixed point at $(z,\,\phi) = (0,\,\pi)$.
\begin{figure}
\includegraphics[width=\columnwidth]{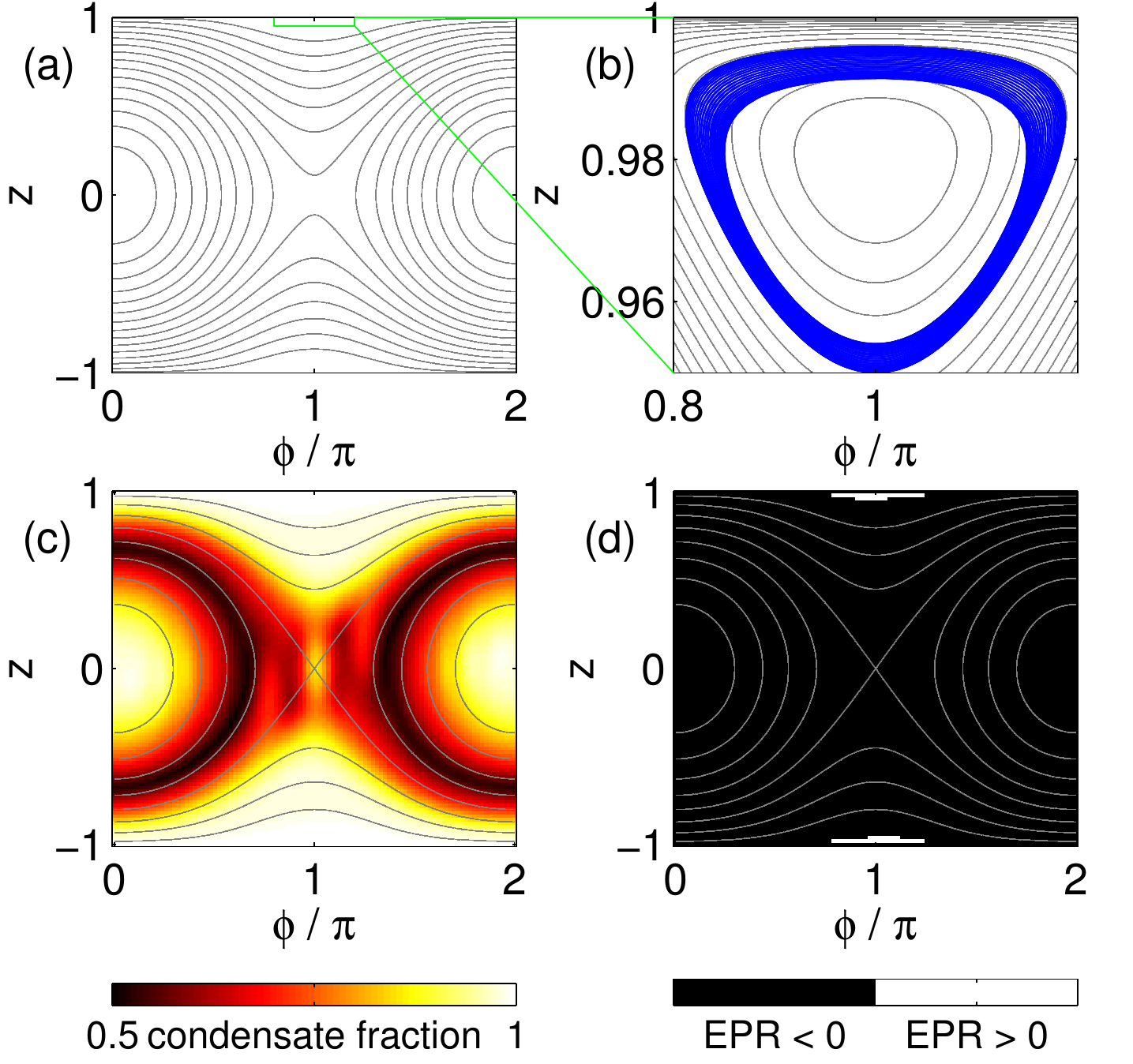}
\caption{(Color online) The global phase space picture of the BEC dimer. \textbf{(a)} Mean-field trajectories. \textbf{(b)} The expectation values of $z$ and $\phi$ over time (phase space trajectory) for an initially coherent state close to the stable fixed point.  The phase trajectory is drawn in blue; nearby mean-field trajectories are in grey.  Note that the actual trajectory has a ``thickness'' associated with it---a phenomenon beyond the mean-field description.\footnote{See Supplemental Material at [URL will be inserted by publisher] for a video of the phase trajectory.} \textbf{(c)} The condensate fraction after 1 second of evolution, for initially coherent states uniformly sampled in $z$ and $\phi$: condensate fraction remains high for initial conditions in the neighborhood of the stable fixed points. \textbf{(d)} EPR entanglement for initially coherent states after 1 second of evolution: the only states still EPR entangled are those initially very near the self-trapping fixed points.  All plots are for $N = 40$, $J = 10\,\hbar/\mathrm{s}$ and $U = 100/39 \approx 2.6\,\hbar/\mathrm{s}$, so $\Lambda = 5$.
\label{fig:global_phase_space}}
\end{figure}

In a global phase space picture we analyze the quantum dynamics of an initially pure BEC $\ket{z,\,\phi}$ as a function of the starting position $(z,\,\phi)$.  The condensate fraction or purity, defined as the largest eigenvalue of the single-particle density matrix,
\begin{equation*}
\rho = \begin{pmatrix}
\langle \hat{a}_1^\dagger \hat{a}_1\rangle & \langle \hat{a}_1^\dagger \hat{a}_2\rangle \\
\langle \hat{a}_2^\dagger \hat{a}_1\rangle & \langle \hat{a}_2^\dagger \hat{a}_2\rangle 
\end{pmatrix},
\end{equation*}
measures how close the many-body state is to a pure BEC~\cite{Witthaut2008,Trimborn2009}.  The EPR entanglement, an important resource in quantum metrology, is quantified by the observable~\cite{Hillery2006, He2011},
\begin{equation}
\mathrm{EPR} = \langle \hat{a}_1^\dagger \hat{a}_2\rangle \langle \hat{a}_2^\dagger \hat{a}_1\rangle - \langle \hat{a}_1^\dagger \hat{a}_1 \hat{a}_2^\dagger \hat{a}_2 \rangle.
\end{equation}
The wells are said to be EPR-entangled whenever $\mathrm{EPR} > 0$.  The global phase space picture~\cite{Hennig2012} shows that the condensate fraction remains large near all of the stable fixed points while EPR entanglement is found only near the $z \neq 0$ fixed points (see Figure~\ref{fig:global_phase_space}).  

The global phase space picture suggests a new method for generating EPR entanglement in the Bose--Hubbard dimer: driving the system closer to the mean-field fixed points using controlled atom loss.  Since the mean-field dynamics is particularly simple near the fixed points, one might hope the full quantum dynamics to be simple as well, allowing for a clear yet quantitative understanding.  To develop such an understanding, in Section~\ref{sec:two_frequencies} we describe the full quantum dynamics of the Bose--Hubbard dimer near the mean-field fixed points, and in Section~\ref{sec:dissipation} consider the effects of controlled atom loss on this dynamics.

\section{Dynamics near the self-trapped fixed points}
\label{sec:two_frequencies}
Let us consider the behavior of the system near the so-called self-trapping fixed points, located at $z = \pm \sqrt{1 - 1/\Lambda^2}$, $\phi = \pi$.  In their neighborhood the observables defined in the previous section exhibit peculiar dynamics, the most striking feature of which is the presence of two distinct frequencies (see Figure~\ref{fig:peculiar_dynamics}).
\begin{figure}
\includegraphics[width=\columnwidth]{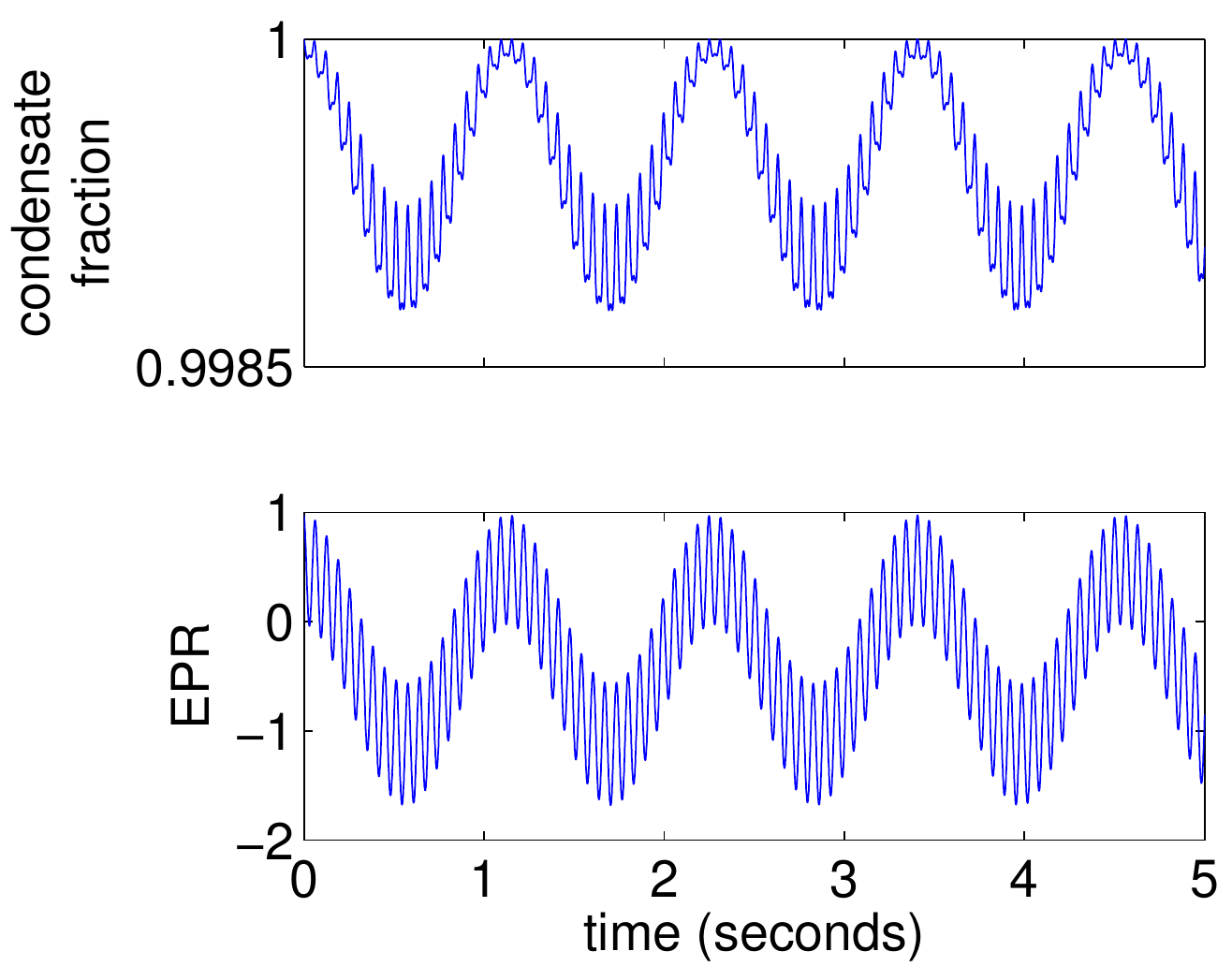}
\caption{(Color online) The condensate fraction and EPR entanglement over time for an initially coherent ($z = 0.95$, $\phi = \pi$) state of the Bose--Hubbard dimer with $N = 40$, $J = 10\,\hbar/\mathrm{s}$ and $U = 100/39 \approx 2.6\,\hbar/\mathrm{s}$, so $\Lambda = 5$.  The stable fixed point is at $z = 2\sqrt{6}/5 \approx 0.98$, $\phi = \pi$.  Results obtained by numerical integration of the Schr\"{o}dinger equation.\label{fig:peculiar_dynamics}}
\end{figure}

The higher frequency is expected on the basis of the mean-field model~\cite{Raghavan1999}.  Linearizing the equations of motion obtained from the Hamiltonian of Eq.~\ref{eq:hamiltonian} about the fixed point yields
\begin{equation}\label{eq:f_MF}
f_\mathrm{MF} = \frac{\sqrt{\Lambda^2 - 1}}{\pi}\frac{J}{\hbar}.
\end{equation}
The mean-field prediction works for a broad range of $\Lambda$ (see Figure~\ref{fig:HF_data}).  The lower frequency, however, cannot be explained within the mean-field approximation.  To see this, consider the trajectory of the system in $z$, $\phi$ space (see Figure~\ref{fig:global_phase_space}(b)).  In the mean-field picture, this trajectory is expected to coincide with the energy contours of $H_\mathrm{MF}$.  However, simulation of the full quantum dynamics reveals a ``thick'' orbit, the size of which oscillates with the low frequency~\footnote{See Appendix~\ref{sec:appendix_quantum_jump} for a discussion of our simulation methods.}.  (Similar low frequency phenomena were noted before~\cite{Milburn1997}, but not discussed quantitatively.)
\begin{figure}
\includegraphics[width=\columnwidth]{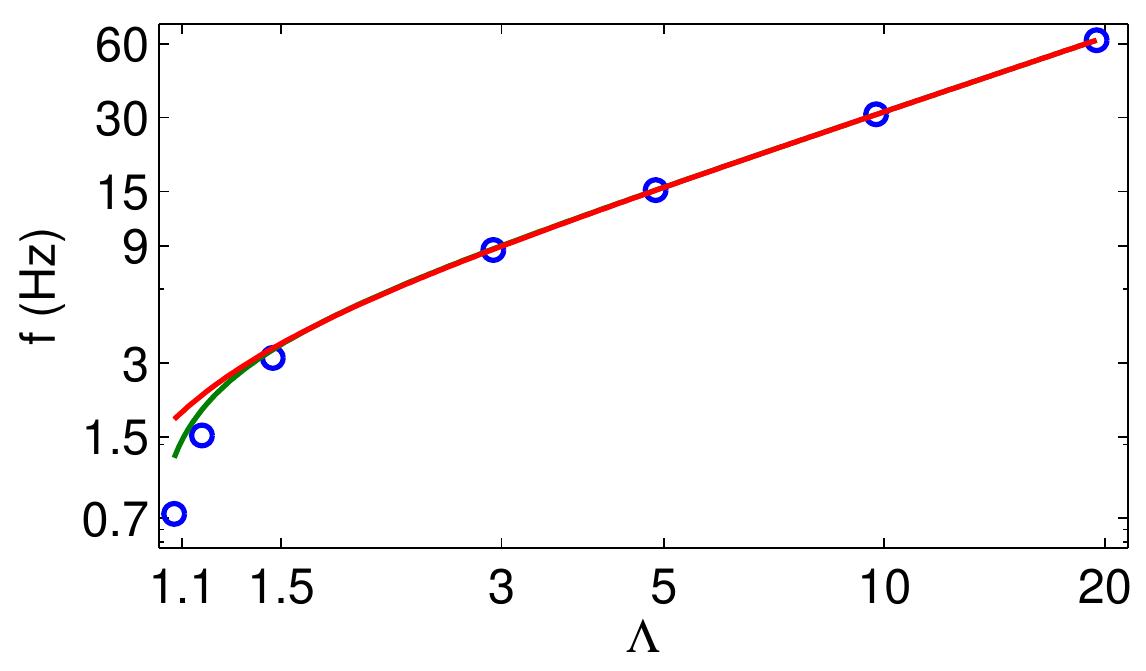}
\caption{(Color online) The high frequency observed near the fixed point for different values of $\Lambda$ (blue circles), on a log-log scale.  The mean-field prediction (green line) is consistent with the numerically exact results.  The $O(\Lambda^{-2})$ perturbative result (show in red) agrees with the mean-field down to $\Lambda = 1.5$, below which it overestimates the frequency; see Section~\ref{sec:J_to_0_limit} for a discussion. \label{fig:HF_data}}
\end{figure}

\subsection{Eigenstate decomposition}
\label{sec:eigenstate_decomposition}
To explain the low frequency oscillations, let us decompose the evolving quantum state into the energy eigenstates:
\begin{equation}\label{eq:eigenstate_decomposition}
\ket{\psi(t)} = \sum_{n=0}^{N} a_n \,\e^{-\imath E_n t/\hbar}\ket{E_n}.
\end{equation}
At first glance, this decomposition does not offer much insight, as the Hamiltonian has a large number of eigenstates and their energies can only be found numerically.  However, in the neighborhood of the system's fixed points only a few states contribute appreciably to the wave function (see Figure~\ref{fig:most_probable_states}).  This is not entirely surprising: the stable fixed points are the extrema of the mean-field energy, so in the neighborhood of these points only the eigenstates with most nearly extremal energy values should contribute to the coherent state. 
\begin{figure}
\begin{center}
\includegraphics[width=\columnwidth]{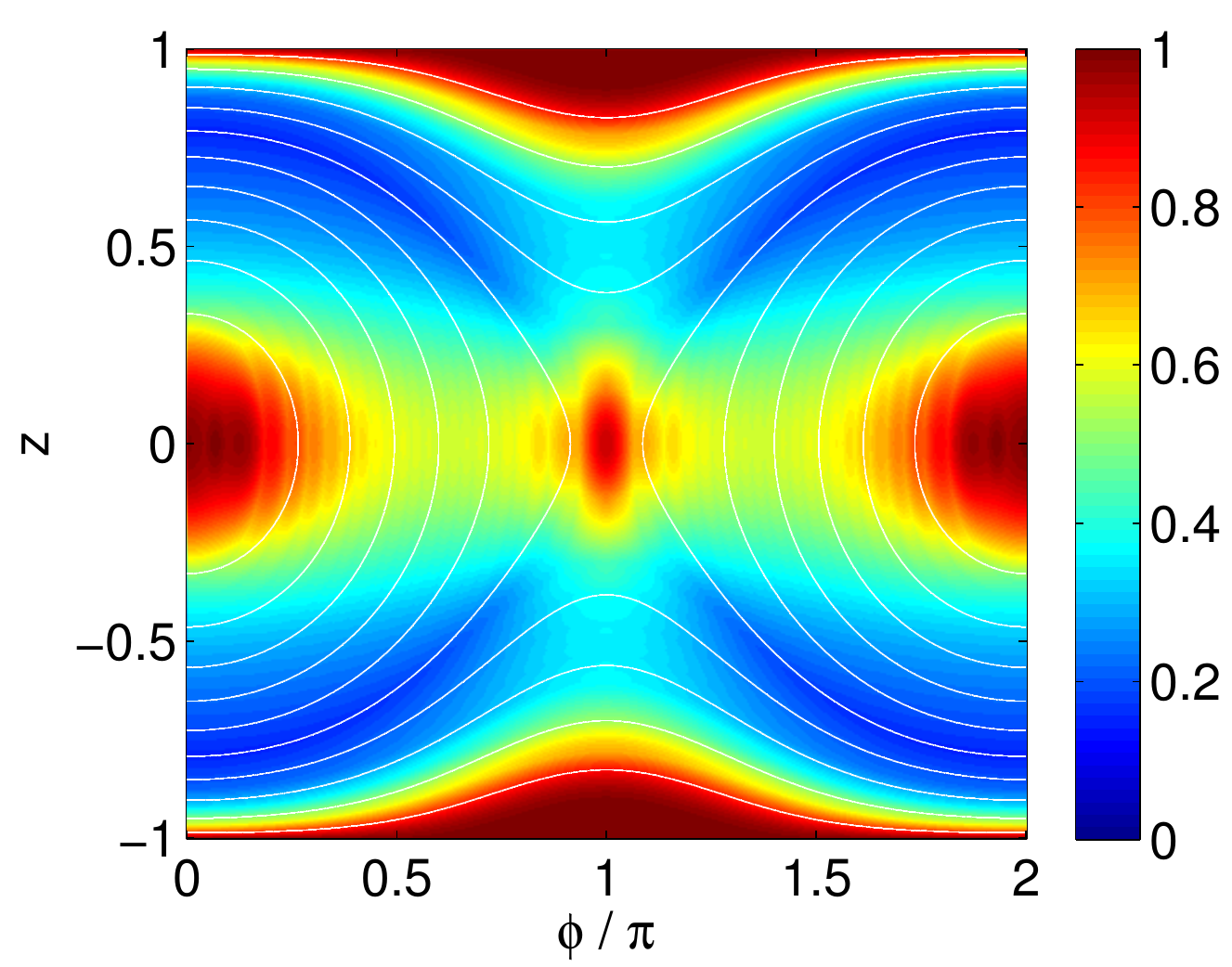}
\caption{(Color online) A projection onto just 3 eigenstates is effective near the mean-field fixed points.  Consider a coherent state $\ket{z,\,\phi} = \sum_{n=0}^{N} a_n \ket{E_n}$ and its projection $\ket{\psi'} = \sum_{n=0}^2 a_n \ket{E_n}$ onto the three energy eigenstates with the largest coefficients $a_n$ in the energy eigenstate expansion.  The plot above shows the norm $\abs{\braket{\psi'}{\psi'}}^2$ of this projection as a function of $z$ and $\phi$.  Note that the norm of the projection is nearly 1 (perfect) near all of the fixed points, including the unstable one, suggesting the three-eigenstate description will be informative for those initial conditions.  The system parameters are the same as those in Figure~\ref{fig:peculiar_dynamics}, namely $J = 10\,\hbar/\mathrm{s}$, $\Lambda = 5$ and $N = 40$.  The mean-field stable fixed points are at $z = 0$, $\phi = 0$ and at $z = \pm 0.98$, $\phi = \pi$, while the unstable fixed point is at $z = 0$, $\phi = \pi$.  Contours of constant mean-field energy are shown in white.  
\label{fig:most_probable_states}}
\end{center}
\end{figure}
 Indeed, for the $z=0.95,\,\phi = \pi$ coherent state of Figure~\ref{fig:peculiar_dynamics}, we find the contributions of the three highest-energy eigenstates to be,
\begin{align*}
a_0 &= 0.9353&	E_0 &= 2040\,\hbar/\mathrm{s}\\
a_1 &= 0.3474&	E_1 &= 1942\,\hbar/\mathrm{s}\\
a_2 &= 0.0653&	E_2 &= 1850\,\hbar/\mathrm{s}.
\end{align*}
These three eigenstates together account for,
\begin{equation*}
\abs{a_0}^2 + \abs{a_1}^2 + \abs{a_2}^2 = 0.9997
\end{equation*}
of the probability weight of the coherent state.  We might therefore expect the frequencies observed in the data to be beats between the eigenstates,
\begin{align*}
(E_0 - E_1)/2\pi &= 15.56\unit{Hz} \equiv f_\mathrm{fast}\\
(E_1 - E_2)/2\pi &= 14.64\unit{Hz}\\
(E_0 - E_2)/2\pi &= 30.23\unit{Hz}
\end{align*}
or perhaps higher-order beats, such as
\begin{equation*}
\frac{E_0-E_1}{2\pi} - \frac{E_1-E_2}{2\pi} = 0.8805\unit{Hz} \equiv f_\mathrm{slow}.
\end{equation*}
These expectations are borne out: $15.6\unit{Hz}$ is the mean-field frequency given by Eq.~\ref{eq:f_MF}, while $0.85(5)\unit{Hz}$ is the measured frequency of the large-amplitude oscillation in Figure~\ref{fig:peculiar_dynamics}.  The two other beats are also seen in the power spectrum of the condensate fraction (at $14.65(5)\unit{Hz}$ and $30.27(5)\unit{Hz}$), though not in that of EPR.~\footnote{See Supplemental Material at [URL will be inserted by publisher] for these power spectra.}

The projection onto the three most important eigenstates recovers not only the frequencies but essentially all of the observables' dynamics (see Figure~\ref{fig:projected_peculiar_dynamics}).  A projection onto just two states is sufficient to recover the mean-field motion, but not the low frequency oscillations.
\begin{figure}
\begin{center}
\includegraphics[width=\columnwidth]{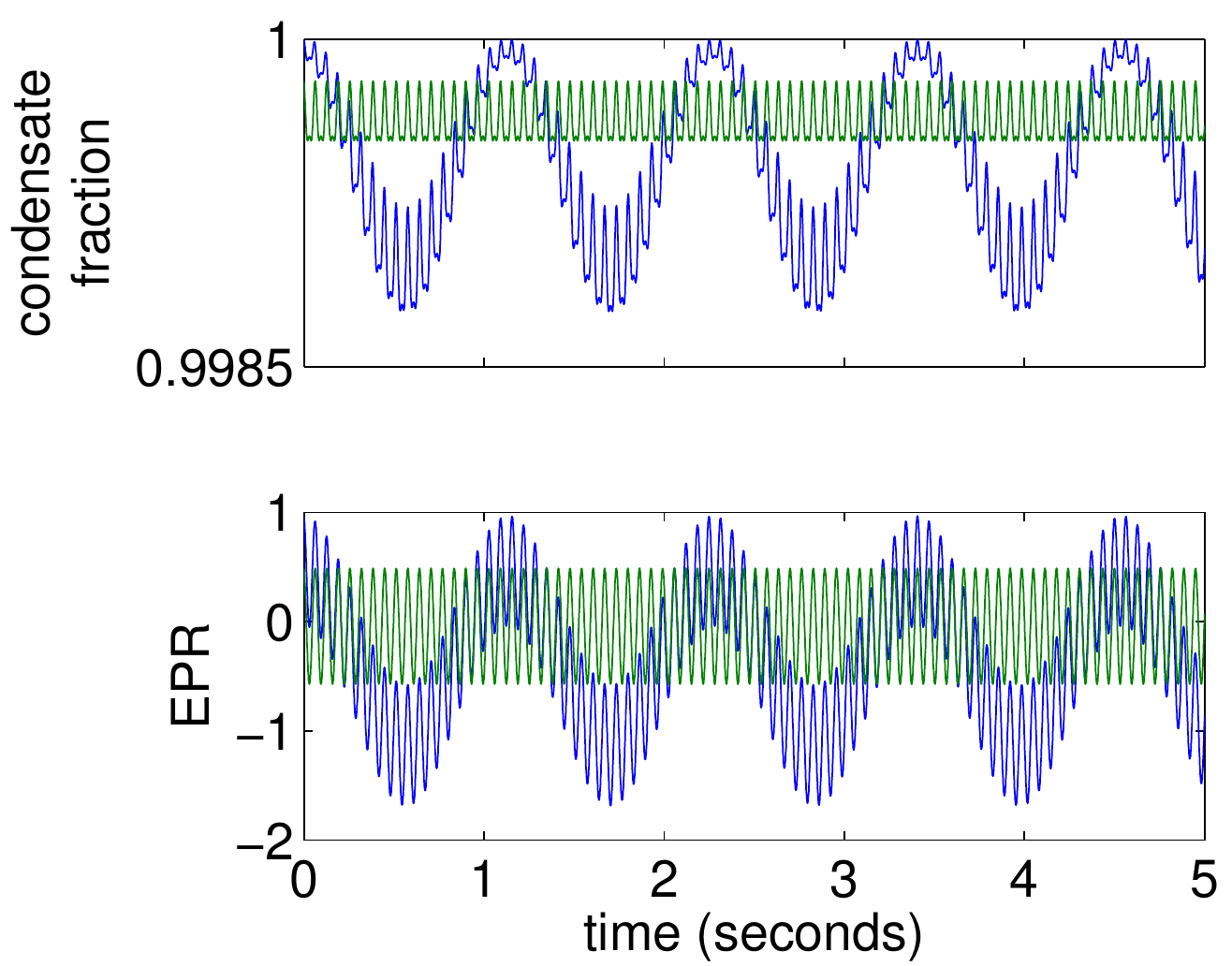}
\caption{(Color online) Validity of the two-frequency approximation.  The condensate fraction and EPR entanglement calculated by approximating the initial coherent state with only 3 eigenstates (blue) is virtually indistinguishable from the numerically exact results shown in Figure~\ref{fig:peculiar_dynamics}.  An approximation with only 2 eigenstates (green) reproduces the fast, but not the slow oscillations.\label{fig:projected_peculiar_dynamics}}
\end{center}
\end{figure}

\subsection{$J\to 0$ limit}
\label{sec:J_to_0_limit}
To gain more insight into the two frequencies, consider the limit $J\to 0$ in which the Hamiltonian can be diagonalized exactly~\footnote{Strictly speaking, the Bose--Hubbard dimer can be analytically solved in the $J\neq 0$ case: a solution based on the Bethe ansatz was developed in the early 1990s~\cite{Enol'skii1991,Enol'skii1992,Links2003}.  This solution replaces the $N+1$ dimensional eigenvalue problem with a set of $N$ nonlinear algebraic equations.  Since for generic $N$ such equations can only be solved numerically, the Bethe ansatz solution amounts to a restatement of our original problem.}.  The eigenstates are the Fock states $\ket{N_1,\,N-N_1} \equiv \ket{N_1}$, and the associated energies are,
\begin{equation}
\epsilon_{N_1} = \frac{N_1(N_1-1)}{2}U + \frac{(N - N_1)(N - N_1 - 1)}{2}U,
\end{equation}
which we denote with $\epsilon$ rather than $E$ to distinguish the $J\to 0$ limit from the general case.  Note the $N_1\to N - N_1$ twofold degeneracy of the spectrum, reflecting the symmetry of the system with respect to a relabeling of the wells.  The frequencies analogous to $f_\mathrm{fast}$ and $f_\mathrm{slow}$ computed numerically in Section~\ref{sec:eigenstate_decomposition} are,
\begin{align*}
\frac{\epsilon_0 - \epsilon_1}{2\pi\hbar} = \frac{U(N-1)}{2\pi\hbar} &\approx 15.9\unit{Hz},\\
\frac{\epsilon_0 - \epsilon_1}{2\pi\hbar} - \frac{\epsilon_1 - \epsilon_2}{2\pi\hbar} = \frac{U}{\pi\hbar} &\approx 0.82\unit{Hz}.
\end{align*}
The $J\to 0$ estimate of $f_\mathrm{fast}$ coincides with the limit of the mean-field expression:
\begin{equation}\label{eq:fast_freq_corrections}\begin{split}
f_\mathrm{MF} &= \frac{\sqrt{\Lambda^2 - 1}}{\pi}\frac{J}{\hbar} = \frac{\sqrt{U^2 (N-1)^2 - 4J^2}}{2\pi\hbar}\\
 &= \frac{U(N-1)}{2\pi\hbar} \sqrt{1-\left(\frac{2J}{U(N-1)}\right)^2}\\
 &= \frac{U(N-1)}{2\pi\hbar}\left[1 - \frac{1}{2}\Lambda^{-2} + O(\Lambda^{-4})\right].
\end{split}\end{equation}
More interesting is $f_\mathrm{slow} = U/\pi\hbar$.  The slow oscillations are a purely quantum phenomenon, as $U/\pi\hbar$ goes to zero in the classical limit of $N\to\infty$ with $\Lambda = U(N-1)/2J$ fixed.  A first hypothesis might identify them with the quantum revivals, in which \emph{all} of the components of the coherent state re-phase~\citep{Greiner2002}.   This is almost correct.  Consider an initial state $\ket{\psi(0)}$ decomposed into energy eigenstates (Eq.~\ref{eq:eigenstate_decomposition}).  Evolving the state over one period $\tau = 1/f_\mathrm{slow}$ of the slow oscillation yields,
\begin{equation}\label{eq:psi_periodicity_maintext}
\ket{\psi(\tau)} = \begin{cases}
\sum_{n=0}^N a_n\ket{n} & \text{for $N$ odd,}\\
\sum_{n=0}^N (-1)^n a_n\ket{n} & \text{for $N$ even,}
\end{cases}
\end{equation}
up to an overall phase (see Appendix for a proof).  For odd $N$, we observe a full revival, as expected.  For $N$ even, the relative phases of the eigenstates are altered, and a revival occurs only after a translation by $2\tau$.  However, the additional phases present after a $\tau$ translation cancel when the condensate fraction and EPR are computed (see Appendix for a proof).  In the limit $J\to 0$ one therefore expects revivals in these observables with a frequency $1/\tau = \frac{U}{\pi\hbar}$ for all values of $N$.

Surprisingly, the $J\to 0$ result is close to the observed frequencies even when $J \gg U$ (see Figure~\ref{fig:LF_data}).  To shed light on this, one may compute the shifts in the frequencies due to $J \gtrsim 0$ using degenerate perturbation theory (see the Appendix for a derivation following~\cite{Bernstein1990}).  The resulting corrections to the $J = 0$ result are proportional to $\Lambda^{-2} \approx (J/NU)^2$:
\begin{equation}
\label{eq:perturbative_corrections}
\begin{split}
\frac{\epsilon_0 - \epsilon_1}{2\pi\hbar} = \frac{U(N-1)}{2\pi\hbar} \bigg[&1 - \frac{1}{2}\frac{N+1}{N-3}\Lambda^{-2} + O(\Lambda^{-4})\bigg],\\
\frac{\epsilon_0 - \epsilon_1}{2\pi\hbar} - \frac{\epsilon_1 - \epsilon_2}{2\pi\hbar} = \frac{U}{\pi\hbar} \bigg[&1 + \frac{3}{2}\frac{(N-1)(N+1)}{(N-5)(N-3)}\Lambda^{-2}\\
& + O(\Lambda^{-4})\bigg].
\end{split}
\end{equation}
The perturbative high frequency estimate agrees with the mean-field result (Eq.~\ref{eq:fast_freq_corrections}) in the limit of large $N$, as one would expect.  Close to the bifurcation the mean-field expression performs better than the perturbative one (see Fig~\ref{fig:HF_data}), presumably because we dropped terms of order $\Lambda^{-4}$ and higher.  But above $\Lambda \approx 2$, the agreement of the perturbative expressions with the observed frequencies of both the mean field motion (Fig~\ref{fig:HF_data}) and the quantum revival (Fig~\ref{fig:LF_data}) is excellent.
\begin{figure}
\includegraphics[width=\columnwidth]{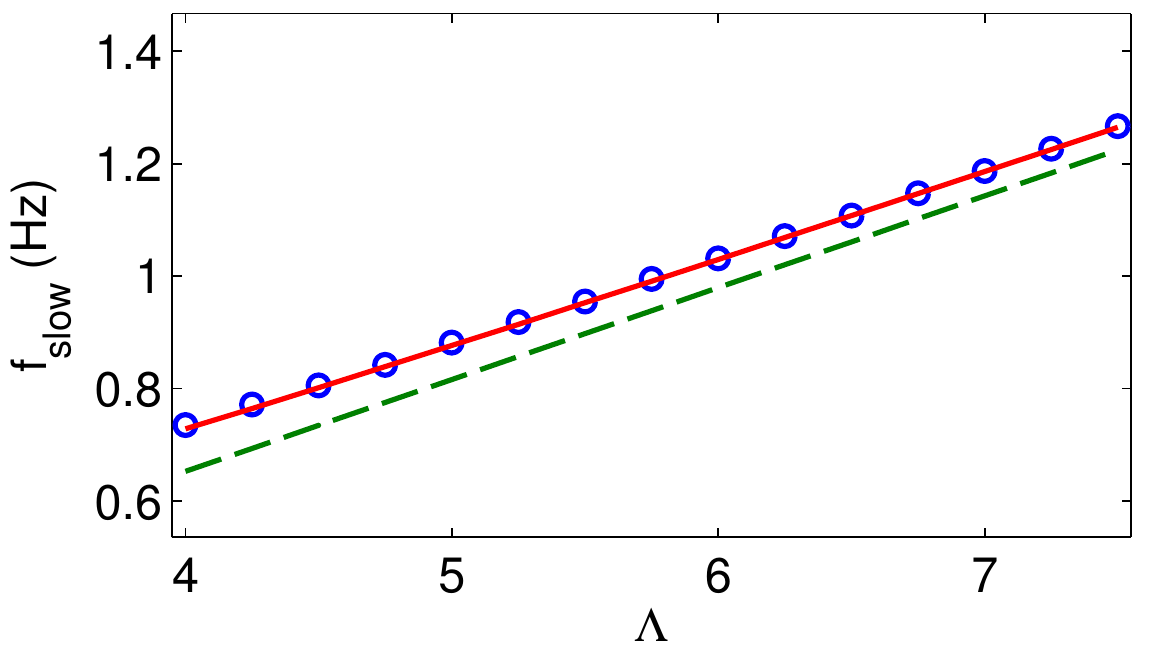}
\caption{(Color online) The slow frequency near the fixed point as a function of $\Lambda$.  The numerically exact values (blue circles) are well described by the second order perturbative results (solid red line).  Zeroth-order perturbation theory (dashed green line) slightly underestimates the frequency.\label{fig:LF_data}}
\end{figure}

\subsection{Region of validity}
How far from the fixed point can we expect the dynamics to be dominated by the two-frequency pattern described above?  To avoid introducing additional frequencies at the outset, the initial coherent state $\ket{z,\,\phi}$ must have an appreciable projection onto just three eigenstates: that is, the projection $\ket{\psi'} = \sum_{n=0}^2 a_n \ket{E_n}$ must satisfy $\norm{\braket{z,\,\phi}{\psi'}}^2 = \norm{\braket{\psi'}{\psi'}}^2 \approx 1$.  But in addition, coherent states at every point of the the mean-field trajectory must be well approximated by the three eigenstates: if $\norm{\braket{z,\,\phi}{\psi'}}^2$ deviates significantly from 1 anywhere along an orbit, a breakdown of the two-frequency pattern is expected.  An instructive example of such a breakdown is observed as the system approaches the bifurcation ($\Lambda \to 1_+$).  Although in the neighborhood of the stable fixed points the norm of the three-eigenstate projection remains high, the mean-field orbits venture out of this neighborhood (see Figure~\ref{fig:most_probable_states_L1}).  The true quantum dynamics involves tunneling from one stable fixed point to the other~\footnote{See Supplemental Material at [URL will be inserted by publisher] for a video showing such tunneling in the Husimi function.} which is classically forbidden and does not conform to the two-frequency paradigm described in this section.
\begin{figure}
\includegraphics[width=\columnwidth]{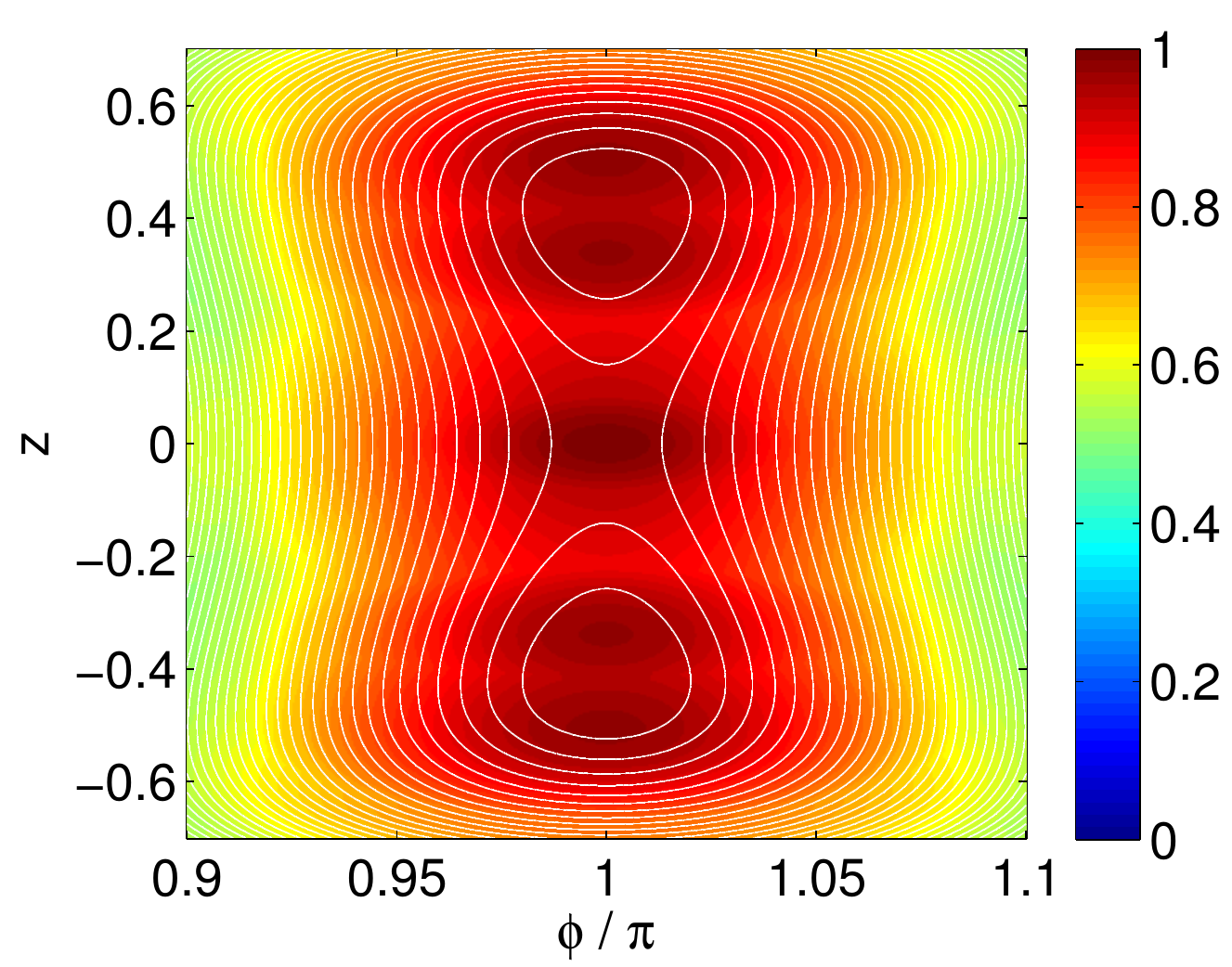}
\caption{(Color online) Breakdown of the three-eigenstate approximation near the bifurcation.  The squared norm $\norm{\braket{\psi'}{\psi'}}^2$ of the projection of coherent states onto the three energy eigenstates with the largest coefficients in the energy eigenstate expansion is plotted, for $\Lambda = 1.1$.  The mean-field trajectories are overlaid in white.  Note that the projection norm is not conserved along the mean-field trajectories, indicating the breakdown of the mean-field approximation and the two-frequency pattern.\label{fig:most_probable_states_L1}}
\end{figure}

Thus, the two-frequency description is valid only for initial conditions in some neighborhood of the stable fixed points.  However, this is generally sufficient to understand the generation of EPR entanglement: On long time scales EPR entanglement is present only for initial conditions close to the fixed points, as shown in Figure~\ref{fig:global_phase_space}.

What is more, the slow oscillations set the timescale for which EPR entanglement is present in the sysem.  To obtain a global picture of entanglement generation, we simulated the dynamics of 10,000 initially coherent states uniformly sampled from the Bloch sphere.  In Figure~\ref{fig:entangled_point_fraction} we plot, as a function of time, the fraction of these in which the two wells are entangled.  Pronounced revivals occur with the frequency $f_\mathrm{slow}$ analyzed above.  The implication, supported by an examination of individual phase space trajectories~\footnote{See Supplemental Material at [URL will be inserted by publisher] for a video of the phase space trajectories.}, is that entanglement is only observed in those regions of phase space where its dynamics is dominated by the two-frequency behavior.  In this sense, the two-frequency model explains the dynamics of the dimer's entanglement quite generically.
\begin{figure}
\includegraphics[width=\columnwidth]{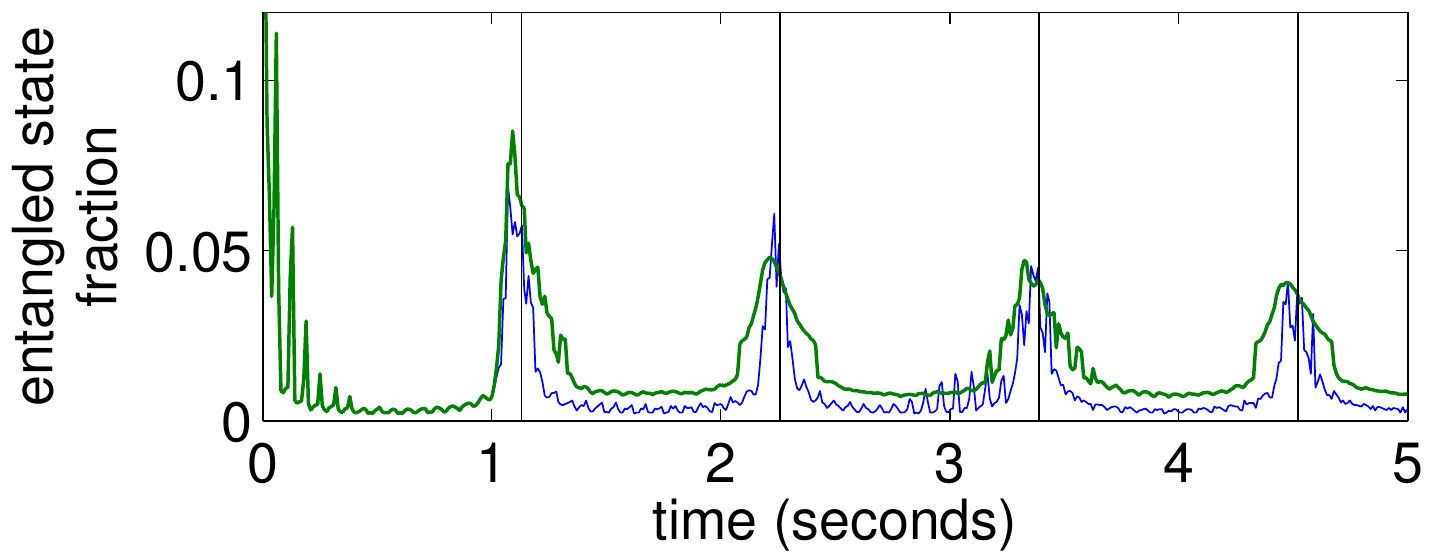}
\caption{(Color online) EPR entanglement is predominatly found at the quantum revival times and can be promoted by applying localized dissipation.  We plot the fraction of coherent state initial conditions for which the two wells are EPR entangled ($\mathrm{EPR} > 0$), as a function of time for $\Lambda = 5$.  The initial conditions are uniformly sampled on the Bloch sphere.  The quantum revival times near the fixed point (multiples of $\tau = 1.13\unit{s}$) are marked with black vertical lines.  The thin blue line is obtained in the absence of dissipation; the thick green line is the result seen when atom loss at the second site is induced between seconds 1 and 1.25 of the simulation.  Applying dissipation increases the fraction of initial conditions for which the wells are persistently entangled. \label{fig:entangled_point_fraction}}
\end{figure} 

\section{Dissipation-induced coherence}
\label{sec:dissipation}
Atoms can be removed from a double-well optical trap with single-site resolution using strong resonant laser blasts or a focused electron beam~\cite{Gericke2008,Wurtz2009}.  This process can be described by the quantum master equation in Lindblad form for the density matrix $\hat{\rho}$~\cite{Breuer2002},
\begin{equation}
\label{eq:master_equation}
\frac{d}{dt}\hat{\rho} = -\imath [\hat{H}, \hat{\rho}] - \frac{1}{2} \sum_{j=1}^2 \gamma_j \left(\hat{a}^\dagger_j \hat{a}_j \hat{\rho} + \hat{\rho} \hat{a}^\dagger_j \hat{a}_j - 2\hat{a}_j \hat{\rho} \hat{a}^\dagger_j\right),
\end{equation}
where $\gamma_j$ is the loss rate at site $j$.  Instead of solving the master equation directly, we use the quantum jump method~\cite{Dalibard1992, Carmichael1993, Molmer1993, Garraway1994, Plenio1998}, discussed further in Appendix~\ref{sec:appendix_quantum_jump}.  Previous studies carried out along these lines show that controlled atom loss may lead to improved coherence (as measured by, among other indicators, the condensate fraction) in the Bose--Hubbard dimer~\cite{Trimborn2008, Witthaut2008} and in multi-well systems (i.e., lattices)~\cite{Trimborn2011, Witthaut2011, Kordas2012}.

An example of dissipation-induced coherence, simulated using the quantum jump method, is shown in Figure~\ref{fig:dissipation_induced_coherence}.  The initial condition is a coherent state near the self-trapping fixed point.  After a second of free evolution, atoms are removed from the less populated site for half a second.  The result is a long-term increase in condensate fraction and a transition from intermittent entanglement to a persistently entangled state.  This process can be understood within the phase space picture: the system's trajectory is driven towards the stable fixed point, which is a region of high entanglement.
\begin{figure*}
\includegraphics[width=\textwidth]{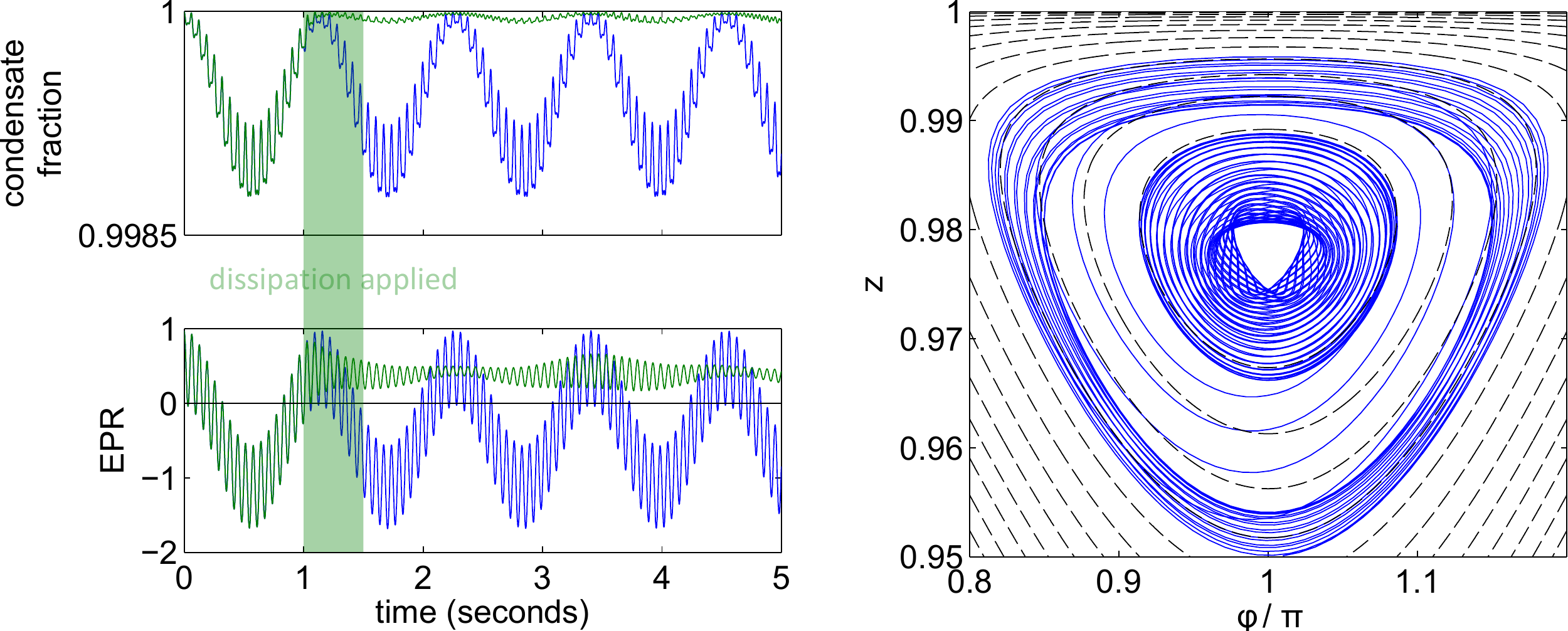}
\caption{(Color online) Dissipation-induced coherence, with signatures in condensate fraction, EPR entanglement and phase space trajectory (expectation values of $z$ and $\phi$).  The initial condition is the same as in Figure~\ref{fig:peculiar_dynamics}, but atom loss ($\gamma_2 = 5\unit{J/\hbar}$) is induced at site 2 between seconds 1 and 1.5 of the simulation.  The plots of condensate fraction and EPR compare the results without (blue) and with (green) dissipation.\label{fig:dissipation_induced_coherence}}
\end{figure*}

How representative is the picture presented above?  Consider again the evolution of 10,000 coherent states uniformly spaced in $z$ and $\phi$, shown in Figure~\ref{fig:entangled_point_fraction}.  Between the quantum revivals, the fraction of entangled states is substantially increased by the application of dissipation.  This implies the mechanism shown in Figure~\ref{fig:dissipation_induced_coherence} operates for an appreciable range of initial conditions.

\section{Summary \& Outlook}

We have used the global phase space picture of the BEC dimer to illuminate the dynamics of entanglement in this system and provide a novel perspective on dissipation-induced coherence.  We showed that for initial conditions close to the mean-field self-trapping points the dimer's dynamics is completely captured by a projection onto just three eigenstates.  Where the projection is successful two frequencies appear prominently in the observables: $f_\mathrm{fast}$, due to the mean-field motion, and $f_\mathrm{slow}$, associated with a quantum revival.  These frequencies are accurately analytically approximated by a second-order expansion about the strong-coupling limit.  The frequency $f_\mathrm{slow}$ sets the dominant time scale for the dynamics of EPR entanglement in the BEC dimer.  This is because the regions of phase space in which our description is valid coincide with the regions where EPR entanglement persists.  It is also within these regions that dissipation-induced entanglement can be induced.

The significance of this work is two-fold.  Firstly, the patterns we describe---two-frequency motion near the fixed point, the driving of the system into the fixed point by dissipation and the resulting enhanced coherence---should be observable in ongoing experiments.  Secondly, and more broadly, analogous patterns may be present in larger, multi-well systems of cold atoms in optical lattices.  These systems are potential platforms for quantum information processing, but by virtue of their size cannot be analyzed via exact techniques.  Consequently, relationships between approximate but tractable semiclassical dynamics and inherently quantum behavior such as are described here offer an attractive path to large-scale quantum engineering.

\nocite{Abdullaev2001,Hennig2010,Freericks1996}

\begin{acknowledgments}
We wish to thank Luca d'Alessio, Pjotrs Gri\v{s}ons and especially Anatoli Polkovnikov for helpful discussions.  This work was supported in part by Boston University, by the U.S.~National Science Foundation under grant No.~PHYS-1066293, and by a grant of the Max Planck Society to the MPRG Network Dynamics.  HH acknowledges support by the German Research Foundation under grant No.~HE 6312/1-1.  We are also grateful for the hospitality of the Aspen Center for Physics.
\end{acknowledgments}

\appendix

\section{Revivals of the wavefunction}

Consider a Bose--Hubbard dimer with $J = 0$ and $N$ atoms.  The energy eigenstates of this system are the Fock states $\ket{n} \equiv \ket{N_1,\,N - N_1}$, with energies
\begin{equation}
\epsilon_n = \frac{n(n-1)}{2}U + \frac{(N-n)(N-n-1)}{2}U.
\end{equation}
It will prove convenient to define a de-dimensionalized energy,
\begin{equation}
h_n \equiv \frac{\epsilon_n}{U}.
\end{equation}
The coherent states of the dimer are of the form,
\begin{equation}
\ket{\psi(t)} = \sum_{n=0}^N a_n \,\e^{-\imath \epsilon_n t/\hbar}\,\ket{n},
\end{equation}
with the expansion coefficients $a_n$ all nonzero.  Consider translating the coherent state in time by
\begin{equation}\label{eq:def:tau}
\tau \equiv \frac{\pi\hbar}{U}.
\end{equation}
As was asserted in the main text, the results of this translation depend on the value of $N$:
\begin{equation}\label{eq:psi_periodicity}
\ket{\psi(\tau)} = \begin{cases}
\ket{\psi(0)} & \text{for $N = 1 + 4p$,}\\
-\sum_{n=0}^N (-1)^n a_n\ket{E_n} & \text{for $N = 2 + 4p$,}\\
-\ket{\psi(0)} & \text{for $N = 3 + 4p$,}\\
\sum_{n=0}^N (-1)^n a_n\ket{E_n} & \text{for $N = 4 + 4p$,}
\end{cases}
\end{equation}
with $p\in\mathbb{Z}_{\geq 0}$.  Let us prove this assertion case by case.

\begin{thmcase}
If $N = 1 + 4p$ for $p\in\mathbb{Z}_{\geq 0}$, then $\ket{\psi(\tau)} = \ket{\psi(0)}$.
\end{thmcase}
\begin{proof}
Note that,
\begin{equation}
\exp\left(-\imath \epsilon_n \tau/\hbar\right) = \exp(-\imath\pi h_n).
\end{equation}
The de-dimensionalized energy $h_n$ satisfies,
\begin{equation*}
\begin{split}
2 h_n &= n(n-1) + (N-n)(N-n-1)\\
 &= n(n-1) + (1 + 4p - n)(1 + 4p - n - 1)\\
 &= n^2 - n + 4p - n +16p^2 - 4np - 4np + n^2\\
 &= 2(n^2 - n) + 4p + 16p^2 - 8pn\\
h_n &= n(n-1) + 2(p + 4p^2 - 2pn)
\end{split}
\end{equation*}
The right-hand side is even for any $n\in \mathbb{Z}_{\geq 0}$ and any $p\in\mathbb{Z}_{\geq 0}$.  Therefore, $h_n$ is even and $\exp\left(-\imath \pi h_n\right) = 1$.  It follows that $\ket{\psi(\tau)} = \ket{\psi(0)}$.
\end{proof}

\begin{thmcase}
If $N = 3 + 4p$ for $p\in\mathbb{Z}_{\geq 0}$, then $\ket{\psi(\tau)} = -\ket{\psi(0)}$.
\end{thmcase}
\begin{proof}
We proceed as in the previous case.
\begin{equation*}
\begin{split}
2 h_n &= n(n-1) + (N-n)(N-n-1)\\
 &= n(n-1) + (3 + 4p - n)(3 + 4p - n - 1)\\
 &= n^2 - n + 6 + 12p - 3n + 8p + 16p^2\\
 &\quad - 4np - 2n - 4np + n^2\\
 &= 2n^2 - 6n + 20p + 16p^2 - 8np + 6\\
h_n &= \left[b(b-1) + 2(-b + 5p + 4p^2 - 2np)\right] + 3
\end{split}
\end{equation*}
Note that the number in square brackets is always even; thus, $h_n$ is odd.  Consequently, for any $n$,
\begin{equation*}
\exp\left(-\imath \pi h_n\right) = -1
\end{equation*}
The claim then follows from the definition of $\ket{\psi(t)}$.
\end{proof}

\begin{thmcase}
Let $N$ be even.  Then,
\begin{equation*}
\begin{split}
\ket{\psi(\tau)} &= \sum_{n=0}^N a_n\,\e^{-\imath \epsilon_n \tau/\hbar}\ket{E_n}\\
 &= \begin{cases}
\sum_{n=0}^N (-1)^n a_n\ket{E_n} & \text{for $N = 4p + 4$,}\\
-\sum_{n=0}^N (-1)^n a_n\ket{E_n} & \text{for $N = 4p + 2$,}
\end{cases}
\end{split}
\end{equation*}
with $p\in\mathbb{Z}_{\geq 0}$.
\end{thmcase}
\begin{proof}
Let $q = N/2$.  We have,
\begin{equation*}
\begin{split}
2 h_n &= n(n-1) + (2q-n)(2q - n - 1)\\
 &= 2n^2 + 4q^2 - 4nq - 2q\\
h_n &= (n^2 - q) + 2(q^2 - nq) = (n^2 - q) + \text{even factor}
\end{split}
\end{equation*}
The factor $n^2 - q$ (and, by extension, $h_n$) is even if and only if $n$ and $q$ are of the same parity.  Recall that $q$ is even when $N = 4p + 4$ and odd when $N = 4p + 2$.  Therefore,
\begin{equation*}
\begin{split}
\exp\left(-\frac{\imath \epsilon_n \tau}{\hbar}\right) &= \exp(-\imath \pi h_n) \\
&= \begin{cases}
1  & \parbox[t]{0.25\textwidth}{if $N = 4p + 4$ and $n$ is even or $N = 4p + 2$ and $n$ is odd,}\\
-1 & \parbox[t]{0.25\textwidth}{if $N = 4p + 4$ and $n$ is odd or $N = 4p + 2$ and $n$ is even.}
\end{cases}
\end{split}
\end{equation*}
The claim follows immediately from the definition of $\ket{\psi(t)}$.
\end{proof}

\section{Revivals of the condensate fraction and EPR}

Consider the two observables discussed in the main text.  The condensate fraction is given by the normalized largest eigenvalue of the single-particle density matrix,
\begin{equation}\label{eq:app:c}
c = \frac{1}{2N}\left(\rho_{11} + \rho_{22} + \sqrt{(\rho_{11} - \rho_{22})^2 + 4\rho_{12}\rho_{21}}\right),
\end{equation}
where $\rho_{ij} = \langle \hat{a}_i^\dagger \hat{a}_j \rangle$.  The entanglement measure EPR is,
\begin{equation}\label{eq:app:EPR}
\mathrm{EPR} = \langle \hat{a}_1^\dagger \hat{a}_2\rangle \langle \hat{a}_2^\dagger \hat{a}_1\rangle - \langle \hat{a}_1^\dagger \hat{a}_1 \hat{a}_2^\dagger \hat{a}_2 \rangle.
\end{equation}
In this section, we show that these observables take the same values at time $t = \tau = \pi\hbar/U$ as at $t = 0$, regardless of the value of $N$.  For $N$ odd, this follows immediately from the results of the first section of this Appendix, so assume $N$ even.

Consider first the condensate fraction.  We have,
\begin{equation*}
\rho_{ij}(t = 0) = \sum_n \sum_m a_n^* a_m \melement{E_n}{\hat{a}_i^\dagger \hat{a}_j}{E_m}
\end{equation*}
and, by the result proven in the first section,
\begin{equation*}
\rho_{ij}(t = \tau) = \sum_n \sum_m (-1)^{n+m} a_n^* a_m \melement{E_n}{\hat{a}_i^\dagger \hat{a}_j}{E_m}.
\end{equation*}
At $J = 0$ the energy eigenstates are the Fock states, so the components of $\rho$ can be found immediately:
\begin{align*}
\rho_{11}(t=\tau) &= \sum_n \sum_m (-1)^{n+m} a_n^* a_m n \delta_{n,\,m}\\
 &= \sum_n (-1)^{2n} \abs{a_n}^2 n = \sum_n \abs{a_n}^2 n = \rho_{11}(t=0)\\
\rho_{22}(t=\tau) &= \rho_{22}(t = 0)\quad\text{(analogously)}\\
\rho_{12}(t=\tau) &= \sum_{n,m} (-1)^{n+m} a_n^* a_m \times \\
&\qquad\quad \sqrt{(m+1)(N-m)}\delta_{n,\,m+1}\\
 &= \sum_n (-1)^{2n-1} a_n^* a_{n-1} \sqrt{n(N-n+1)}\\
 &= -\sum_n a_n^* a_{n-1} \sqrt{n(N-n+1)}\\
 &= -\rho_{12}(t=0)\\
\rho_{21}(t=\tau) &= -\rho_{21}(t=0)\quad\text{(Hermiticity)}
\end{align*}
Note that $\rho(t=\tau)$ differs from $\rho(t=0)$ only in the sign of the off-diagonal elements.  But these elements enter the condensate fraction only through their product (cf.~Eq.~\ref{eq:app:c}).  Therefore, $c(t = \tau) = c(t = 0)$.

The argument for EPR is similar.  The first term in Eq.~\ref{eq:app:EPR} is equal to $\rho_{12}\rho_{21}$, and so the same at $t = \tau$ as at $t = 0$; the second term is equal to $N_1 N_2$, the product of the wells' populations, and so independent of time.  Therefore, $\mathrm{EPR}(t = \tau) = \mathrm{EPR}(t = 0)$.

\section{Perturbation about the $J\to 0$ limit}

In this Appendix, we derive the perturbative corrections to the $J = 0$ mean-field and quantum revival frequencies (Equation~\ref{eq:perturbative_corrections}).

It is convenient to rescale the problem by dividing all energies by $NU$.  Then, the unperturbed Hamiltonian is given by,
\begin{equation}
H_0 = \frac{1}{2N}\left(\hat{a}_1^\dagger \hat{a}_1^\dagger \hat{a}_1 \hat{a}_1 + \hat{a}_2^\dagger \hat{a}_2^\dagger \hat{a}_2 \hat{a}_2 \right)
\end{equation}
and is its representation in the Fock basis is,
\begin{equation}
\begin{pmatrix}
\frac{N(N-1)}{2N} & & & & \\
& \frac{(N-1)(N-2)}{2N} & & & \\
& & \frac{(N-2)(N-3)}{2N} + \frac{1}{N} & & \\
& & & \ddots & \\
& & & & \frac{N(N-1)}{2N}
\end{pmatrix}
\end{equation}
In analogy to the main text, we will denote the diagonal entries $\varepsilon_0$, $\varepsilon_1$, and so on; because of the rescaling of the Hamiltonian, $\epsilon_i = NU \varepsilon_i$. The perturbed Hamiltonian is
\begin{equation}
H = H_0 + \frac{J}{NU}\left(-\hat{a}_1^\dagger \hat{a}_2 - \hat{a}_2^\dagger \hat{a}_1 \right) \equiv H_0 + \lambda V
\end{equation}
and the Fock basis representation of $V$ is the tridiagonal matrix,
\begin{equation}
-\begin{pmatrix}
0 & \sqrt{N} & & & \\
\sqrt{N} & 0 & \sqrt{2(N-1)} & & \\
& \sqrt{2(N-1)} & 0 & \sqrt{3(N-2)} & \\
& & & \ddots & \\
& & & \sqrt{N} & 0
\end{pmatrix}
\end{equation}
Let $P$ be a projection operator onto a subspace corresponding to a set of degenerate levels of $H_0$.  Number the levels $n = 0,\,1,\,2,\,\ldots$, as in the main text; then, this operator is represented by a matrix with only two nonzero entries, $P_{n+1,n+1} = P_{N+1-n,N+1-n} = 1$.  Degenerate perturbation theory can be used to show that the $k$th order corrections to the level energies, $\epsilon^{(k)}$, are the eigenvalues of the matrix~\cite{Bernstein1990},
\begin{equation}
P W_k P
\end{equation}
where the first few $W_k$ matrices are
\begin{equation}
\label{eq:Ws}
\begin{split}
W_1 &= V, \\
W_2 &= -V L^{-1} V,\\
W_3 &= V(L^{-1} V)^2,\\
W_4 &= -V(L^{-1} V)^3 - \epsilon^{(2)} V(L^{-1})^2 V,
\end{split}
\end{equation}
with $L^{-1}$ represented by a diagonal matrix with entries,
\begin{equation}
(L^{-1})_{ll} = \begin{cases}
1/(H_0 - \epsilon^{(0)} I)_{ll} & \text{if $(H_0 - \epsilon^{(0)} I)_{ll} \neq 0$,}\\
0 & \text{otherwise,}
\end{cases}
\end{equation}
where $\epsilon^{(0)}$ are the unperturbed energies.

Since $V$ has no diagonal entries, $P W_1 P$ is a matrix of zeroes and there are no first-order corrections.  (That $\epsilon^{(1)} = 0$ is used in the expressions for $W_3$ and $W_4$ in Eq.~\ref{eq:Ws}, which would otherwise contain terms proportional to $\epsilon^{(1)}$.)  The second order corrections can be computed using the matrix $W_2$.  Its nonzero entries are,
\begin{widetext}
\begin{equation}
(W_2)_{ij} = \begin{cases}
-N/(\varepsilon_1-\varepsilon_n) & \text{for $i = j = N+1$ or $i = j = 1$,} \\
-\frac{(i-1)(N+2-i)}{\varepsilon_{i-1} - \varepsilon_n} - \frac{i(N+1-i)}{\varepsilon_i - \varepsilon_n} & \text{for other $i = j$,}\\
-\sqrt{i(i+1)(N+1-i)(N-i)}/(\varepsilon_{i+1} - \varepsilon_n) & \text{for $j = i + 2$,}\\
-\sqrt{j(j+1)(N+1-j)(N-j)}/(\varepsilon_{j+1} - \varepsilon_n) & \text{for $j = i - 2$.}
\end{cases}
\end{equation}
\end{widetext}
where $n$ is the index of the level considered.

Recall that the degenerate states are those corresponding to rows $i$ and $N+1-i$, for $i = 1,\,2,\,\ldots$ .  Since $W_2$ has nonzero entries only on the main diagonal and the $\pm 2$ diagonals, the projection onto the subspace of degenerate levels $P W_2 P$ may have off-diagonal entries only for the second-lowest energy level.  Conversely, if $N \gg 1$, the second-order corrections to the higher energy levels are given by the corresponding diagonal entries of $W_2$.  The corrections to the three highest-energy levels up to second order are given by,
\begin{equation}
\begin{split}
\varepsilon_0 &\to \varepsilon_0 - \frac{N}{\varepsilon_1 - \varepsilon_0}(J/NU)^2 \\
     &\quad \varepsilon_0 + \frac{N^2}{N-1} (J/NU)^2 \\
\varepsilon_1 &\to \varepsilon_1 - \left(\frac{N}{\varepsilon_0 - \varepsilon_1} + \frac{2(N-1)}{\varepsilon_2 - \varepsilon_1}\right) (J/NU)^2 \\
     &\quad \varepsilon_1 + \frac{N(N^2-N+1)}{(N-3)(N-1)} (J/NU)^2\\
\varepsilon_2 &\to \varepsilon_2 - \left(\frac{2(N-1)}{\varepsilon_1 - \varepsilon_2} + \frac{3(N-2)}{\varepsilon_3 - \varepsilon_2}\right)(J/NU)^2\\
 &\quad \varepsilon_2 + \frac{N(N^2 - 3N + 8)}{(N-5)(N-3)}(J/NU)^2
\end{split}
\end{equation}
These expressions lead immediately to Eq.~\ref{eq:perturbative_corrections}.

Consider now the third-order corrections to the energies, determined by the matrix $W_3$.  Except for the lowest-lying states, there are no off-diagonal entries in $P W_3 P$; this follows from the central result of~\cite{Bernstein1990}, since off-diagonal entries would break the degeneracy the authors prove to hold to $N$th order of perturbation theory.  Therefore, the third-order corrections to the energies are given by the diagonal entries of the matrix $W_3$.  As we show below, these entries are all zero, and thus there are no corrections of this order.

Let $A$ be a matrix. We will call $A$ an \emph{odd matrix} if $A_{i,i+p} = 0$ for all $i$ and all $p$ even (including $p = 0$), and an \emph{even matrix} if $A_{i,i+p} = 0$ for all $i$ and all $p$ odd.  Even and odd matrices have the following properties under multiplication:
\begin{thmlemma}
Let $A$ and $B$ be odd $N\times N$ matrices.  Then, $AB$ is an even matrix.
\end{thmlemma}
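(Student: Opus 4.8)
The plan is to reduce the claim to an elementary parity count on the index offsets. Writing the product entrywise,
\begin{equation*}
(AB)_{ij} = \sum_k A_{ik}\, B_{kj},
\end{equation*}
a given summand can contribute only when both factors are nonzero, so the whole argument amounts to controlling the parities of the index differences that appear.

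First I would fix the indices $i$ and $j$ and ask which values of $k$ yield a nonzero summand. Since $A$ is \emph{odd}, the entry $A_{ik}$ can be nonzero only when the offset $k-i$ is odd; likewise, since $B$ is \emph{odd}, the entry $B_{kj}$ can be nonzero only when $j-k$ is odd. Hence a term survives only if both $k-i$ and $j-k$ are odd. The key step is then the observation that
\begin{equation*}
j - i = (j - k) + (k - i)
\end{equation*}
is a sum of two odd integers and is therefore even. Consequently every nonzero summand forces $j-i$ to be even, so $(AB)_{ij} = 0$ whenever $j-i$ is odd. Writing $j = i+p$, this says $(AB)_{i,i+p} = 0$ for all odd $p$, which is precisely the defining property of an \emph{even} matrix.

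I expect no genuine obstacle here: the argument is purely about the parity of index differences and is insensitive to the matrix size, so no boundary or range-of-summation issues arise—the sum over $k$ runs over whatever indices are valid, and only terms with the correct parities survive. The one point that requires care is consistent bookkeeping of the offset conventions, namely that ``odd'' refers to an odd displacement $p = j - i$ from the main diagonal, so that the additivity of offsets under matrix multiplication is applied correctly. With that convention fixed, the parity cancellation is immediate and the lemma follows.
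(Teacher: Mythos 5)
Your proof is correct and follows essentially the same route as the paper's: both arguments reduce to the observation that a nonzero summand $A_{ik}B_{kj}$ requires the offsets $k-i$ and $j-k$ to be odd, forcing $j-i$ to be even. Your phrasing via additivity of offsets is a slightly cleaner packaging of the identical parity count.
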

\begin{proof} Let $n$ be an odd integer.
\begin{equation*}
(AB)_{i,i+n} = \sum_{k=1}^N A_{i,k} B_{k,i+n} = \sum_{p=1-i}^{N-i} A_{i,i+p} B_{i+p,i+n}
\end{equation*}
Since $A$ is an odd matrix, we may restrict the sum to odd $p$ (the other entries are zero):
\begin{equation*}
(AB)_{i,i+n} = \sum_{p\,\mathrm{odd}} A_{i,i+p} B_{i+p,i+n}
\end{equation*}
Let $m = i+p$.
\begin{equation*}
B_{i+p,i+n} = B_{m,m+n-p}
\end{equation*}
Since $p$ is odd and $n$ odd, $n-p$ is even and so $B_{i+p,i+n} = 0$ because $B$ is an odd matrix.  Thus,
\begin{equation*}
(AB)_{i,i+n} = 0
\end{equation*}
for $n$ odd, and $AB$ is an even matrix.
\end{proof}
\begin{thmlemma}
Let $A$ be an odd $N\times N$ matrix and $C$ an even $N\times N$ matrix.  Then, $AC$ and $CA$ are odd matrices.
\end{thmlemma}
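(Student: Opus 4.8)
The plan is to mimic the proof of the previous lemma almost verbatim, since the only ingredient is a parity-counting argument on the diagonal offsets. I would treat the two products $AC$ and $CA$ separately but symmetrically, in each case verifying directly that every entry on an even-offset diagonal vanishes.

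For $AC$, to show it is odd I must check that $(AC)_{i,i+n} = 0$ whenever $n$ is even. First I would expand the matrix product as a sum over the offset $p$,
\begin{equation*}
(AC)_{i,i+n} = \sum_{p} A_{i,i+p}\, C_{i+p,i+n},
\end{equation*}
and immediately discard the terms with $p$ even, since $A$ is odd and those entries vanish. For the surviving odd $p$, I would set $m = i+p$ and rewrite $C_{i+p,i+n} = C_{m,\,m+(n-p)}$. Because $n$ is even and $p$ is odd, the residual offset $n-p$ is odd, so this entry of the even matrix $C$ is zero. Hence every surviving term vanishes and $(AC)_{i,i+n} = 0$ for all even $n$, which is exactly the statement that $AC$ is odd.

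For $CA$ the argument is identical with the roles of the two factors exchanged. Expanding $(CA)_{i,i+n} = \sum_{p} C_{i,i+p}\, A_{i+p,i+n}$, I would this time restrict the sum to \emph{even} $p$ (using that $C$ is even), set $m = i+p$, and write $A_{i+p,i+n} = A_{m,\,m+(n-p)}$. For even $n$ the residual offset $n-p$ is again even, so $A_{i+p,i+n} = 0$ because $A$ is odd, giving $(CA)_{i,i+n}=0$ on every even diagonal.

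I do not expect any genuine obstacle here; the proof is a direct parity-bookkeeping exercise of the same type as the preceding lemma. The only point requiring a little care is keeping straight which factor's parity prunes the summation (the left factor fixes the admissible parity of $p$) and which factor then supplies the vanishing entry, so that in both products the mismatch between the required even total offset $n$ and the fixed parity of $p$ forces an odd residual offset in the other factor.
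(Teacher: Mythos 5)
Your proof is correct and takes essentially the same approach as the paper: expand the product, prune the sum using the parity constraint from the left factor, and then observe that the residual offset $n-p$ lands on a forbidden diagonal of the right factor. The only difference is that you write out the $CA$ case explicitly, where the paper simply declares it analogous to the $AC$ case.
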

\begin{proof}
Let $n$ be an even integer.  Proceeding as before,
\begin{equation*}
\begin{split}
(AC)_{i,i+n} &= \sum_{k=1}^N A_{i,k} C_{k,i+n} = \sum_{p=1-i}^{N-i} A_{i,i+p} C_{i+p,i+n}\\
 &= \sum_{p\,\mathrm{odd}} A_{i,i+p} C_{i+p,i+n}.
\end{split}
\end{equation*}
Let $m = i+p$.
\begin{equation*}
C_{i+p,i+n} = C_{m,m-p+n}.
\end{equation*}
Since $p$ is odd and $n$ is even, $n-p$ is odd.  Since $C$ is an even matrix, $C_{i+p,i+n} = 0$ and
\begin{equation*}
(AC)_{i,i+n} = 0
\end{equation*} 
for $n$ even, proving that $AC$ is an odd matrix.  The proof for $CA$ is analogous.
\end{proof}

The matrix $V$ is odd, while the matrix $L^{-1}$ is even.  By the lemmas above, $W_3 = V(L^{-1}V)(L^{-1}V)$ must be odd.  The diagonal entries of an odd matrix are all zero.  This implies there are no third-order corrections to the energies, except for the lowest few energy levels.  Incidentally, the same argument can be used to show that the fifth-order corrections are zero.

\section{The quantum jump method}
\label{sec:appendix_quantum_jump}

This Appendix outlines the quantum jump method, the technique used in our simulations of the BEC dimer in the presence of dissipation.

We start with a simulation time interval $[0,\, T]$ and an initial state described by a wavefunction $\ket{\psi(t=0)}$.  The time interval is divided into time steps $\delta t$.  At each time step we act on the state with the time evolution operator,
\begin{equation*}
U = \exp\left(-\frac{\imath}{\hbar} \hat{H'} \delta t\right),
\end{equation*}
where the pseudo-Hamiltonian $\hat{H'}$ consists of the standard Bose-Hubbard Hamiltonian $\hat{H}$ of Equation~\ref{eq:hamiltonian} and an extra term:
\begin{equation*}
\hat{H'} = \hat{H} - \frac{\imath\hbar}{2}\sum_{j=1,2} \gamma_j \hat{a}^\dagger_j \hat{a}_j.
\end{equation*}
In addition to the time evolution, each time step atoms are removed from the wells $j = 1,\, 2$ with probability
\begin{equation*}
\delta p_j = \delta t \cdot \gamma_j \cdot \bra{\psi(t)} \hat{a}_j^\dagger \hat{a}_j \ket{\psi(t)},
\end{equation*}
where $\gamma_j$ the atom loss rate from well $j$.  The time $\delta t$ is taken to be sufficiently small that the possibility of multiple atoms being removed during one time step can be ignored.  If an atom is removed, we update the wavefunction,
\begin{equation*}
\ket{\psi} \to \hat{a}_j \ket{\psi},
\end{equation*}
and reduce the dimension of the Hilbert space.  (The Hilbert space
of the dimer has dimension $N + 1$, where $N$ is the number of particles, so the problem shrinks as atoms are ejected.)  Finally, regardless of whether an atom was removed or not, we renormalize the wavefunction.

A surprising feature of this algorithm is that the time evolution differs from the dissipation-free case even when no atoms are removed.  This is because the absence of a removal event reveals information about the system, changing the probability distribution over the well occupation numbers and so altering the wave functions.  Intuitively, if strong dissipation is applied to a well and yet no atoms are ejected from it, the well is likely to be empty.  Alternately, one may interpret the suppression of tunneling into the well as a manifestation of the quantum Zeno effect~\cite{Cirac1994}.  See Section 3 of~\cite{Molmer1993} for a more detailed discussion and references devoted specifically to this paradoxical ``null measurement'' effect.

\section{A note on projections}

The phase space of the semiclassical model of the BEC dimer is a sphere; representing it in a plane, as we have done in Figures~\ref{fig:global_phase_space},~\ref{fig:most_probable_states}, ~\ref{fig:most_probable_states_L1} and~\ref{fig:dissipation_induced_coherence}, requires a choice of projection.  Throughout this work, we have used the cylindrical equal-area projection, also known as the Lambert projection.  In the Lambert projection, the $x$ and $y$ coordinates on the map are proportional to the longitude $\phi$ (polar angle) and the cosine of the colatitude $z = \cos \theta$ (the usual $z$ coordinate of spherical coordinates).  In other words, the Lambert is an axial projection of the sphere onto a cylinder tangent at the equator.

As its name suggests, the cylindrical equal-area projection preserves areas: any two regions on the sphere with equal areas $S_1 = S_2$ are mapped to two regions on the map with equal areas $S'_1 = S'_2$.  However, this projection deforms angles (it is not conformal), as a circle on the sphere is mapped to an ellipse, the eccentricity of which increases towards the poles.

The Lambert projection should not be confused with the Mercator projection, defined by
\begin{align*}
x &= \phi\\
y &= \ln \tan \left(\frac{\pi - \theta}{2}\right)
\end{align*}
The Mercator projection is conformal, mapping a circle anywhere on the sphere to a circle in the $x,\,y$ plane, but is not area-preserving: famously, Mercator maps of the Earth show South America to be smaller than Greenland, though the former land mass is in fact some eight times larger.  In addition, since $y\to\infty$ as $\theta\to 0$ or $\pi$, a Mercator map must employ a $\theta$ cutoff and cannot include the poles.  Some authors~\cite{Trimborn2009, Chuchem2010} have described their maps of the Bloch sphere as being drawn in the Mercator projection, but labeled the $y$ scale from $z = -1$ to $z = 1$, suggesting they might have actually used the less well-known Lambert (in which $y\propto z$ and the poles are in the range of the map).

A thorough and freely available reference on projections of the sphere is the review publication of the US Geological Survey,~\cite{Snyder1987}.

%

\end{document}